\newtheorem{defi}{Definition}
\begin{document}
\title{
An Overflow Problem in  Network Coding for   Secure Cloud Storage
}
\author{
Yu-Jia Chen,~\IEEEmembership{Member,~IEEE} and Li-Chun Wang,~\IEEEmembership{Fellow,~IEEE,}
\\National Chiao Tung University, Taiwan
\\Email: allan920693@g2.nctu.edu.tw and lichun@cc.nctu.edu.tw}

\maketitle

\begin{abstract}

In this paper we define the overflow problem  of a network coding storage system in which the encoding parameter and the storage parameter are mismatched. Through analyses and experiments, we first show  the impacts of the overflow problem in a network coding scheme, which not only waste storage spaces, but also degrade coding efficiency. To avoid the overflow problem, we then develop the network coding based secure storage (NCSS) scheme.
Thanks to considering both security and storage requirements in encoding procedures and
distributed architectures,  the NCSS can improve the performance of a cloud storage system from both the aspects of storage cost and coding processing time.
We analyze the maximum allowable stored encoded data under the perfect secrecy criterion, 
and provide the design guidelines for the secure cloud storage system to enhance coding efficiency  and achieve the minimal storage cost.

%

\end{abstract}


\section{Introduction}  \label{Introduction}


Network coding is an attractive solution for secure cloud storage because of  achieving the unconditional security.
As long as protecting  partial network coded data,  a non-collusive eavesdropper cannot decode any symbol even with huge computing power  for infinite time \cite{Oliveira2010}.
In principle, network coding simply mixes the data from different network nodes  based on the well-designed linear combination rules.
Hence, almost incurring no bandwidth expansion is another advantage of network coding.

\begin{figure}[t]
\centering
\includegraphics[width=0.9\textwidth]{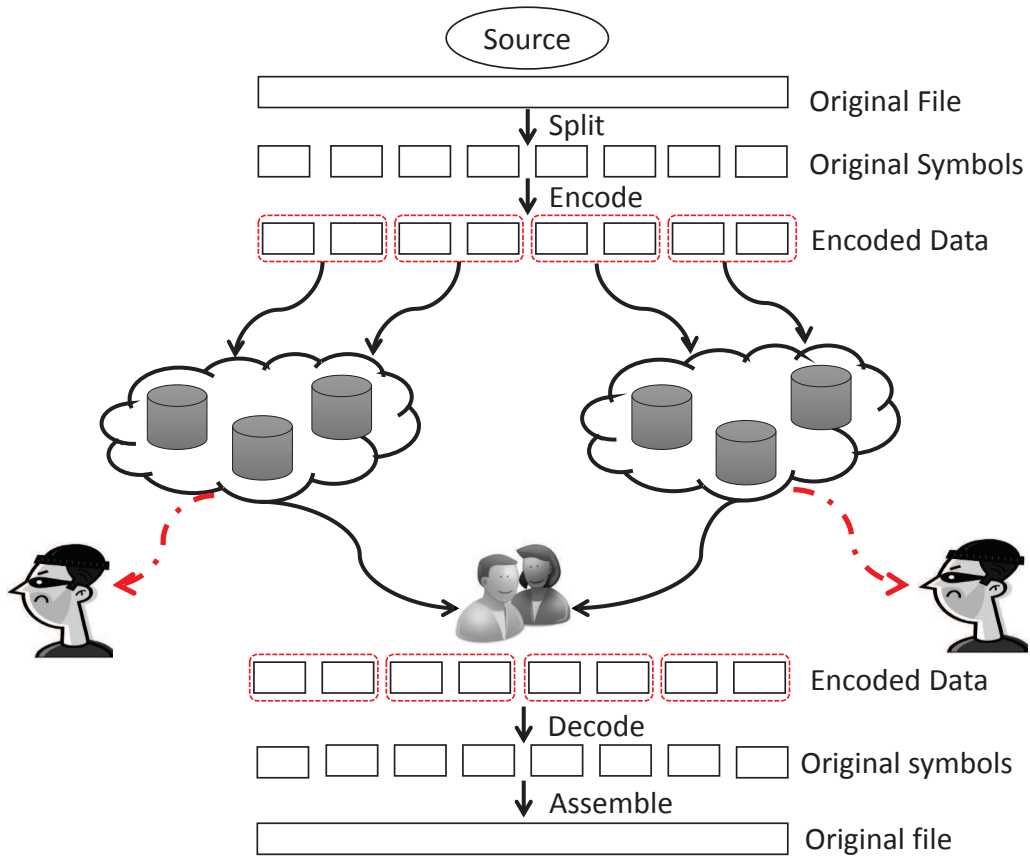}\\
\caption{An example of secure cloud storage using network coding.} \label{fig2}
\end{figure}

A secure cloud storage using network coding is illustrated in Fig. \ref{fig2}.
The original file is split into smaller chunks of symbols, each of which is encoded by Vandermonde matrix \cite{Klinger_21}. The different subsets of the encoded data are stored to two cloud databases. A legitimate user with access to two cloud databases can recover the entire original file.
However, an eavesdropper with access to only one of the two cloud databases is unable to decode any of the original symbols \cite{oliveira2012coding}. In summary, the network coding storage system consists of three procedures: splitting, encoding, and distributing to storage nodes.

Nevertheless, a secure network coding storage system may encounter a practical design issue when encoding parameters (such as the size of encoding matrix) is not jointly designed with the storage parameters (such as the storage size per node). If the mismatch between encoding and storage parameters occurs, it  can cause bandwidth expansion and redundant computation cost. We coin the term  the overflow problem for secure network coding storage system in this paper because the mismatch of encoding and storage parameters results in extended length
of coded data in the format of digits.


Table \ref{binary_example} shows an example of the overflow problem for binary digits, where ${\bf{A}}$ is the encoding matrix for network coding, ${b_i}$ and ${c_i}$ are the original data, and network coded data, respectively.
Assume that  ${c_1}$ and ${c_3}$ are stored in the first database, and $c_2$ is stored in the second database.
We can see that the bit length of coded ${c_i}$  is larger than
that of ${b_i}$ for $i = 1 \sim 3$. Also, the minimum bit length required for storing  ${c_1}$ and ${c_3}$ is three  in the first database, but in the second database the minimum bit length for storing $c_2$  is two.

\begin{table*}
\vspace*{0cm} \caption{\bfseries Example of the overflow problem for binary digits} \label{binary_example}
\begin{center}
\vspace*{-0.1cm}\scriptsize
\renewcommand{\arraystretch}{2}
\begin{tabular}{|l|l|l|l|l|l|l|l|} \cline{1-3}

 \multicolumn{1}{|c}{${\bf{A}}$}
& \multicolumn{1}{|c}{${\bf{b}}$}
&\multicolumn{1}{|c|}{${\bf{c}}$} \\
\hline
%
%

\multicolumn {1}{|c}  {$\left[ {\begin{array}{*{20}{c}}
1&1&1\\
5&1&2\\
6&1&4
\end{array}} \right]$}
&\multicolumn{1}{|c}  {$(0,1,1)^T$}
&\multicolumn{1}{|c|} {$(2,3,5)^T$=$(10,11,101)^T$} \\
\hline

\end{tabular}
\end{center}
\end{table*}

To overcome the overflow problem, we  propose a systematic  design methodology to calculate the important system parameters of a network-coded cloud storage system.
The major objective of the proposed scheme is to provide correct mapping between the encoding parameters and the storage parameters.
The contributions of this work are explained as follows.
\begin{itemize}

\item
Formulate the overflow problem  of a network-coded cloud storage system, and perform experiments to show the impacts of the overflow problem, which not only waste storage spaces, but degrade computational efficiency.

\item
Propose a network coding based secure storage (NCSS) scheme to  solve the overflow problem. To our best knowledge, the overflow problem for a network-coded storage system has not been investigated in the literature yet.

\item
Derive the upper bound of the amount of encoded data that can be stored in cloud databases to achieve the unconditional security level (i.e., perfect secrecy).
Based on the derived upper bound, we present the analysis of storage cost minimization in the proposed NCSS scheme subject to different security levels.

\item
Finally, based on the experimental results, we suggest the design guidelines for determining important system parameters (e.g., the size of the encoding matrix) to accelerate the network coding process.
\end{itemize}

The rest of this paper is organized as follows. Section II describes related works. In Section III, we formulate the overflow problem in cloud storage using network coding. In Section IV, we present the NCSS scheme. In Section V we give the security and storage analyses of the proposed scheme. Section VI shows the experimental results. Finally, we give our concluding remarks in Section VII.

\section{Background and Related Work}  \label{Background}

Network coding can be viewed as a generalized store-and-forward network routing principle. Messages from different  source nodes are combined and regenerated at the intermediate nodes according to algebraic encoding. Besides the well-known advantages of throughput enhancement \cite{6134039,zeng2014throughput,wu2006distributed} and data robustness \cite{Fragouli2006}, the recent studies on network coding focus on reliability and security enhancement.

\subsection{Network Coding for Data Recovery}
Network coding can improve the efficiency of data recovery process when storage nodes fail in distributed storage systems.
It was proved that the data recovery problem of distributed storage systems is equivalent to the multicasting problem of network routing \cite{Dimakis}.
The authors of \cite{Hu2010} designed a cooperative network coding recovery mechanism for multiple node failures.
A proxy-based  multiple cloud storage system with the feature of fault tolerance  was built based on the network coding storage scheme \cite{Hu2012}.
A network coding method called \textsl{Regenerating Code} was proposed to improve the repair process of distributed storage systems \cite{papailiopoulos2012simple}.
Different from erasure coding, the repaired data fragments are mixed in intermediate nodes, thereby reducing the repair bandwidth.
The authors of \cite{lu2015network} applied network coding to optimize the reliability performance of frequently accessed data in cloud storage systems.

\subsection{Network Coding for Data Security}
Another research area of network coding is to prevent data being eavesdropped during transmission. The information-theoretical security problem for an untrusted channel was first discussed in \cite{Ozarow1985}. A network coding system was built to prove that a wiretapper cannot obtain any information from the transmitted message \cite{Cai2002}. A weaker type of security issue was investigated in\cite{Bhattad2005}, where a node can decode packets only after receiving sufficient linear independent encoded data.
The construction of a secure linear network code for a wiretap network was presented in \cite{cai2011secure}.

The secrecy capacity for a network-coded cloud storage system was investigated in  \cite{pawar2010secure,shah2011information}, where  the secrecy capacity is defined as the maximum amount of data that can be securely stored under the perfect secrecy condition.
The perfect secrecy condition ensure the eavesdropper cannot obtain any information of source data.
The secrecy capacity for nodes with different storage capacities was derived in \cite{ernvall2013capacity}. The coding scheme that can achieve the storage upper bound of secrecy capacity was proposed in \cite{rawat2014secure}. The maximum data size being stored under the perfect secrecy condition for any number of eavesdropped  nodes was determined in \cite{goparaju2013data}. The authors of \cite{Shah2011} considered how to achieve the  information-theoretical secrecy when an eavesdropper can access some data  in the storage nodes.

For secure storage over multiple clouds, similar to this work, the authors of \cite{oliveira2012coding} proposed a security protection scheme to ensure that no symbols can be decoded by an eavesdropper, which is weaker than perfect secrecy. In \cite{Chen2016EavesdroppingPF}, a link eavesdropping problem in a network-coded cloud storage system was investigated.
A publicly verifiable protocol for network coded cloud storage was proposed in \cite{chen2016secure}.

\subsection{Objective of This Paper}
Different from the previous works focusing on the security and reliability enhancement of network coding, in this paper we focus on the storage efficiency  and perfect security when applying network coding in multiple untrusted clouds.
We define the overflow problem when network coding is applied in a cloud storage system, which has not been discussed previously. The overflow problem will result in extra extended encrypted data in the format of digits during encoding process, thereby increasing storage and computation cost.
To overcome the overflow problem, we develop a systematic design methodology for calculating the encoding and storage parameters of a network-coded cloud storage system. Based on the proposed method, we further solve the storage cost optimization problem under the perfect secrecy constraint. The ultimate goal of this paper is to demonstrate that 
 the performance of a network-coded cloud storage system can be improved by jointly designing the encoding and storage and parameters.

\section{System Model and Problem Statement}  \label{model}
Now we describe the coding scheme adopted in this paper and give the formal definition of the  overflow problem.

\subsection{Coding Scheme}
Consider the original data vector ${\bf{b}} = {({b_1}, \ldots ,{b_n})^T}$ with base ${d}$, where elements ${b_i}$ are independent random integers uniformly distributed over ${\left\{ {0, \ldots ,d - 1} \right\}}$.
We use the terms original data and plaintext data interchangeably in this paper.
The goal of a cloud user is to securely store ${\bf{b}}$ to multiple cloud databases.
To achieve this goal, we adopt the same network coding scheme as \cite{oliveira2012coding}, in which the input data are mapped to encoded symbols by linear transformation.

Denote ${\bf{A}}$ as an $n \times n$ Vandermonde matrix, where $\left[ {{A_{i,j}}} \right] = (a_j^{i - 1})$.
${\bf{A}}$ is used for the encoding matrix, where all the coefficients ${a_i}$ are distinct nonzero elements over a finite field ${F_q}$, $q = {2^k} > n$. Note that ${\bf{A}}$ can be a $(n+m) \times n$ matrix, where the the amount of redundancy $m$ depends on the reliability requirement of the storage system.

A cloud user encodes data ${\bf{c}} = {({c_1}, \ldots ,{c_n})^T} = {\bf{Ab}}$ and splits the encoded data into ${p}$ parts. We assume the cloud user can arbitrarily store any piece of the encoded data to any cloud database. Let ${{{\bf{\tilde c}}_i}( i = 1, \ldots ,p)}$ be the encoded data vector stored in the ${i}$-th cloud database. A legitimate user can collect ${{{\bf{\tilde c}}_i}}$ from the cloud databases and obtain the original data by performing ${{{\bf{A}}^{{\bf{ - 1}}}{\bf{c}}}}$.

\subsection{Security Model}
Assume that an eavesdropper has infinite computing power, but can access only one cloud database.  Also, it is assumed that the eavesdropper can have the full information about the encoding and decoding schemes, including the knowledge of the encoding matrix. The objective of an eavesdropper is to guess the original data.
Although we consider only one eavesdropper in this paper, our result can be extended to the case of multiple eavesdroppers.

In our considered cloud storage system, every cloud database can support different security levels \cite{Barua2011}. Denote ${P_{{e_i}}}$ as the probability that the ${i}$-th cloud database can resist attacks. Also, the cloud user specifies a security requirement ${P_u}$, which represents the maximum probability that an eavesdropper can guess the original data.
Next, we will show how to solve the overflow problem subject to the constraints of  the security 
requirement when considering  distributing encoded symbols to multiple cloud databases.

\subsection{Overflow Problem}
Although it was proved that the aforementioned network coding scheme can help prevent eavesdroppers from obtaining the information of the original data \cite{Oliveira2010}, the overflow problem  occurs from the encoding process and the storage process. Specifically,
if the encoding parameter and the storage parameter are mismatched,  the length of  encoded data in digital format may become larger than the length of the original data in digital format.
As a result, storage spaces are wasted due to redundant encoded data.
Now we formally state this problem by introducing the following definition.

\begin{defi} [Strictly Non-overflow] \label{non-overflow}  Let ${{l_d}(a)}$  be the number of digits that represents ${a}$ in base ${d}$. A piece of encoded data ${\bf{c}} = {({c_1}, \ldots ,{c_n})^T}$ is considered to be strictly non-overflow if and only if ${{l_d}({c_i}) \le {l_d}({b_i})}$ for every $i$. Thus,  the length of the encoded data is equal to that of the plaintext data.
\end{defi}

\begin{table*}
\vspace*{0cm} \caption{\bfseries Example of the definitions for overflow problem} \label{example}
\begin{center}
\vspace*{-0.1cm}\scriptsize
\renewcommand{\arraystretch}{2}
\begin{tabular}{|l|l|l|l|l|l|l|l|l|} \cline{1-8}

\multicolumn{1}{|c}{}
&\multicolumn{1}{||c}{${\bf{A}}$}
& \multicolumn{1}{|c}{${\bf{b}}$}
&\multicolumn{1}{|c}{${\bf{c}}$}
&\multicolumn{1}{|c}{${{\bf{\tilde c}}_1} = ({c_1},{c_3})$}
&\multicolumn{1}{|c}{${{\bf{\tilde c}}_2} = ({c_2})$}
&\multicolumn{1}{|c}{strictly non-overflow}
&\multicolumn{1}{|c|}{$3$-bounded non-overflow} \\
\hline
%
%
\multicolumn{1}{|c} {Case1}
&\multicolumn {1}{||c} {$\left[ {\begin{array}{*{20}{c}}
1&1&1\\
5&1&2\\
6&1&4
\end{array}} \right]$}
&\multicolumn {1}{|c} {$\left( {0,{\rm{1}},0} \right)^T$}
&\multicolumn{1}{|c}  {$\left( {{\rm{1}},{\rm{1}},{\rm{1}}} \right)^T$}
&\multicolumn{1}{|c}  {$\left( {{\rm{1}},1} \right)$}
&\multicolumn{1}{|c}  {$\left( {{\rm{1}}} \right)$}
&\multicolumn{1}{|c}  {Yes}
&\multicolumn{1}{|c|}  {Yes}   \\
\hline
%
%
\multicolumn{1}{|c} {Case2}
&\multicolumn {1}{||c} {$\left[ {\begin{array}{*{20}{c}}
1&1&1\\
5&1&2\\
6&1&4
\end{array}} \right]$}
&\multicolumn {1}{|c} {$\left( {0,{\rm{1}},{\rm{1}}} \right)^T$}
&\multicolumn{1}{|c}  {$\left( {{\rm{10}},{\rm{11}},{\rm{101}}} \right)^T$}
&\multicolumn{1}{|c}  {$\left( {{\rm{10}},{\rm{101}}} \right)$}
&\multicolumn{1}{|c}  {$\left( {{\rm{11}}} \right)$}
&\multicolumn{1}{|c}  {No}
&\multicolumn{1}{|c|}  {Yes}   \\
\hline
\end{tabular}
\end{center}
\end{table*}

\begin{defi} [${\alpha }$-bounded Non-overflow] \label{bounded non-overflow} Let ${\left| {{{{\bf{\tilde c}}}_i}} \right|}$ denote the number of elements in  ${{{\bf{\tilde c}}_i}}$. A piece of encoded data ${\bf{c}} = {({c_1}, \ldots ,{c_n})^T}$  is considered to be \emph{${\alpha }$-bounded Non-overflow} if and only if
\begin{eqnarray}
\sum\limits_{j = 1}^{\left| {{{{\bf{\tilde c}}}_i}} \right|} {{l_d}} ({c_j}) \le \left| {{{{\bf{\tilde c}}}_i}} \right|\alpha {l_d}({b_i})
\enspace, \nonumber
\end{eqnarray}
for ${1 \le i \le p}$.
\end{defi}

Assume the encoded data are randomly stored in cloud databases. Hence, the increasing cost of storage or computation resources caused by data extension can be measured by the extension degree ${\alpha  = \frac{{{l_d}({c_i})}}{{{l_d}({b_i})}}}$  of the encoded data in the cloud database.
Table \ref{example} show the coding results for the two different overflow cases where $d = 2$ and $p = 2$. In case 1 there are no redundant digits after the encoding process, but in case 2 the extension degree is bounded by $3$.

\section{Network Coding based Secure Storage (NCSS) Scheme }  \label{scheme}
In this section, we analyze the overflow problem of a network-coded cloud storage system.
We first give the criteria  to choose the proper length of the data element to be encoded.
Next, we present the data distribution method for achieving the required security level. Finally, we describe the system design methods of the NCSS scheme.
Table \ref{description} summarizes the notations used in this paper.

\begin{table}
\renewcommand{\arraystretch}{1.3}\tabcolsep=1ex
\caption{\bfseries Notations in this paper} \label{description}
\centering
\begin{tabular}{c c c cc}
\hline\hline
\bfseries Notations & \bfseries Descriptions \\
\hline
${\bf{b}}$ & Original data array \\
\hline
$d$ & Base of ${b_i}$ \\
\hline
${l_d}(a)$   & Number of digits  that represents  $a$ in base $d$ \\
\hline
${\bf{A}}$   & Encoding matrix \\
\hline
$k$ & Use Galois Field size ${2^k}$ for ${\bf{A}}$ \\
\hline
$n$& Matrix size of  ${\bf{A}}$ \\
\hline
$p$  & Total number of cloud databases \\
\hline
${\bf{c}}$   &Encoded data vector  \\
\hline
${{\bf{\tilde c}}_i}$   &Encoded data vector  that stored in the ${i}$-th cloud database\\
\hline
$\left| {{{{\bf{\tilde c}}}_i}} \right|$ & Number of elements in ${{\bf{\tilde c}}_i}$  \\
\hline
${s_i}$ & Number of digits in ${b_i}$ \\
\hline
${\bf{b}}'$ & Regrouped data array \\
\hline
$r$ & Size of  ${\bf{b}}'$\\
\hline
${P_{{e_i}}}$&  Probability of the ${i}$-th cloud database can resist attacks \\
\hline
${P_g}$ & Probability that an eavesdropper can guess the original data\\
\hline
${P_u}$ & Security requirement: Maximum probability that an eavesdropper can guess the original data \\
\hline\hline
\end{tabular}
\end{table}

\subsection{Coding Analysis}
The following theorem can help calculate the encoding parameters to avoid the overflow problem of the secure network coding storage system.

\newtheorem{theorem}{Theorem}
\begin{theorem}  \label{theorem1}
Let ${s_i}$ be the number of digits in ${b_i}$. Then, the system is strictly non-overflow if ${s_i} = s = \frac{k}{{{{\log }_2}d}}\enspace.$
\end{theorem}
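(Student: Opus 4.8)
The plan is to show that if each plaintext symbol $b_i$ has exactly $s = k/\log_2 d$ digits in base $d$, then every encoded symbol $c_i$ also has at most $s$ digits in base $d$, which is precisely the strict non-overflow condition $l_d(c_i) \le l_d(b_i)$. The key observation is that the choice $s = k/\log_2 d$ makes $d^s = 2^k = q$, so a string of $s$ digits in base $d$ represents exactly one element of the finite field $F_q$. First I would make this identification explicit: the set of integers representable with $s$ base-$d$ digits is $\{0, 1, \dots, d^s - 1\} = \{0, 1, \dots, q-1\}$, which is in bijection with $F_q$.

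Next I would argue that the encoding operation $\mathbf{c} = \mathbf{A}\mathbf{b}$ is performed in the field $F_q$, not over the integers. Since all $b_i \in \{0,\dots,d-1\} \subseteq F_q$ and all entries of the Vandermonde matrix $\mathbf{A}$ lie in $F_q$, every component $c_i = \sum_j A_{i,j} b_j$ is computed with $F_q$-arithmetic and therefore is itself an element of $F_q$, i.e. an integer in $\{0, \dots, q-1\}$. Hence $c_i \le q - 1 = d^s - 1$, which gives $l_d(c_i) \le s = l_d(b_i)$ for every $i$, establishing strict non-overflow by Definition~\ref{non-overflow}.

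The main subtlety — and the step I would be most careful about — is the arithmetic model. The motivating examples in Table~\ref{binary_example} and Table~\ref{example} compute $\mathbf{A}\mathbf{b}$ over $\mathbb{Z}$ (e.g. $5\cdot 0 + 1\cdot 1 + 2\cdot 1 = 3$), and it is precisely this integer arithmetic that produces overflow. So the theorem is really asserting that, once we fix $s = k/\log_2 d$ and commit to interpreting symbols as $F_q$-elements with field arithmetic, the digit length cannot grow because $F_q$ is closed. I would therefore state as a hypothesis (or recall from the coding scheme in Section~III) that encoding is done over $F_q = F_{2^k}$, note that $d^s = 2^k$ forces the digit-window for one field element to be exactly $s$ digits wide with no carry-out possible, and conclude. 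I should also verify the edge consideration that $s$ is an integer for the construction to make sense (i.e. $\log_2 d \mid k$), which is implicitly required for the statement to be well-posed; this is a constraint on the admissible pairs $(d,k)$ rather than something to be proved.
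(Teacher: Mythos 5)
Your proposal is correct and rests on the same key fact as the paper's proof --- namely that each encoded symbol $c_i$ is an element of $F_{2^k}$ and hence an integer in $\{0,\ldots,2^k-1\}$, so $l_d(c_i) \le \log_d(2^k-1) \le \log_d 2^k = k/\log_2 d = s = l_d(b_i)$ --- but the two arguments are structured differently. You give a direct proof of the stated sufficiency: fix $s = k/\log_2 d$, observe $d^s = 2^k = q$, and conclude from closure of $F_q$ that no carry-out beyond $s$ digits is possible. The paper instead argues by elimination: it shows that $s_i < k/\log_2 d$ permits overflow (since then $l_d(b_i)$ can fall below $l_d(c_i)_{\max}$) and that $s_i > k/\log_2 d$ is impossible (the symbol would exceed the field size), concluding $s_i = s$. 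The paper's version is really a necessity/uniqueness argument and only implicitly delivers the ``if'' direction claimed in the theorem statement (it also contains the slip ``$b_i = d^{s_i}$'' where ``$b_i \le d^{s_i}-1$'' is meant); your direct argument matches the statement as written more faithfully. Your two side remarks are also well taken and go beyond the paper: the theorem genuinely requires that $\mathbf{c}=\mathbf{A}\mathbf{b}$ be computed with $F_q$-arithmetic (the paper's own Tables~\ref{binary_example} and \ref{example} compute the product over $\mathbb{Z}$, which is exactly what produces the overflow being discussed), and the formula only makes sense when $\log_2 d$ divides $k$ so that $s$ is an integer. Neither caveat is acknowledged in the paper, and making them explicit strengthens the result rather than weakening it.
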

\begin{proof}
First, we assume that ${s_i} < \frac{k}{{{{\log }_2}d}}$. Then, we have
\begin{eqnarray}
\frac{k}{{{{\log }_2}d}} = k  {\log _d}2 = {\log _d}{2^k}
\enspace. \label{s_log}
\end{eqnarray}
Because the coding process is manipulated with integers, we have ${s_i} \le {\log _d}({2^k} - 1)$. Since ${c_i}$ is distributed over $\left\{ {0, \ldots ,{2^k} - 1} \right\}$, the maximum number of digits used to represent an encoded element is ${l_d}{({c_i})_{\max }} = {\log _d}({2^k} - 1)$. Furthermore, the number of digits in ${b_i}$ can be represented as ${l_d}({b_i})$. Thus, we have
\begin{eqnarray}
 {s_i} = {l_d}({b_i}) \le {\log _d}({2^k} - 1) = {l_d}{({c_i})_{\max }} \enspace.
\end{eqnarray}
As a result, the overflow problem occurs because the length of encoded data may be larger than the length of the original data. Secondly, we assume that ${s_i} > \frac{k}{{{{\log }_2}d}}$. We take exponentiation with base $d$ on both sides and  we have ${d^{{s_i}}} > {d^{{{\log }_d}{2^k}}} = {2^k}$ from (\ref{s_log}). Since ${b_i} = {d^{{s_i}}}$,  it contradicts the fact that the maximum value of ${b_i}$  is ${2^k} - 1$. Hence, it follows that ${s_i} = s = \frac{k}{{{{\log }_2}d}}$.
\end{proof}
\begin{theorem}   \label{theorem2}
The system is ${\alpha }$-bounded non-overflow if  ${s_i} \ge \frac{1}{\alpha }{\log _d}({2^k} - 1)$  for every $i$.
\end{theorem}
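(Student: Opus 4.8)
The plan is to reduce the statement to the uniform digit-length bound on encoded symbols already established in the proof of Theorem~\ref{theorem1}. Each encoded element $c_j$ lies in the finite field $F_q$ with $q = 2^k$, hence $c_j \in \{0,\ldots,2^k-1\}$, so the number of base-$d$ digits needed to represent it satisfies $l_d(c_j) \le \log_d(2^k-1)$. The key feature is that this bound is independent of $j$, which is what makes the per-database accounting immediate.

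First I would fix a database index $i$ with $1 \le i \le p$ and sum the uniform bound over the $|\tilde{c}_i|$ encoded elements stored there, giving $\sum_{j=1}^{|\tilde{c}_i|} l_d(c_j) \le |\tilde{c}_i|\,\log_d(2^k-1)$. Next I would invoke the hypothesis $s_i \ge \frac{1}{\alpha}\log_d(2^k-1)$, which rearranges to $\log_d(2^k-1) \le \alpha s_i$; since $s_i$ is by definition the number of digits in $b_i$, i.e. $s_i = l_d(b_i)$, this reads $\log_d(2^k-1) \le \alpha\, l_d(b_i)$. Substituting into the previous inequality yields $\sum_{j=1}^{|\tilde{c}_i|} l_d(c_j) \le |\tilde{c}_i|\,\alpha\, l_d(b_i)$, which is exactly the condition of Definition~\ref{bounded non-overflow}. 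As $i$ was arbitrary, the system is $\alpha$-bounded non-overflow.

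I do not expect a genuine obstacle: the argument is essentially a one-line consequence of the maximal digit-length $\log_d(2^k-1)$ of an element of $F_{2^k}$ together with the hypothesis. The only points needing care are bookkeeping — applying $l_d(c_j)\le\log_d(2^k-1)$ uniformly to every term so that the factor $|\tilde{c}_i|$ comes out correctly, and identifying $s_i$ with $l_d(b_i)$. It is also worth remarking in the write-up that this hypothesis is the sharpest of its form, since in the worst case some stored $c_j$ attains $l_d(c_j)=\log_d(2^k-1)$ while $b_i$ has exactly $s_i$ digits, so any strictly smaller $s_i$ would break the bound.
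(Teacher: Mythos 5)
Your argument is correct and is essentially identical to the paper's own proof: both bound each stored symbol's digit length by the uniform maximum $\log_d(2^k-1)$, sum over the $|\tilde{\mathbf{c}}_i|$ elements in a database, and then apply the hypothesis $\log_d(2^k-1)\le \alpha s_i = \alpha\, l_d(b_i)$ to obtain the condition of Definition~\ref{bounded non-overflow}. No gaps; your added remark on sharpness is a reasonable bonus not present in the paper.
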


\begin{proof}
Since ${s_i} = {l_d}({b_i})$ and ${l_d}{({c_i})_{\max }} = {\log _d}({2^k} - 1)$, we have
\begin{align}
\sum\limits_{j = 1}^{\left| {{{{\bf{\tilde c}}}_i}} \right|} {{l_d}({c_j})}
\le& \left| {{{{\bf{\tilde c}}}_i}} \right|  {\log _d}({2^k} - 1)\nonumber \\
 =& \alpha  \cdot \frac{1}{\alpha }  \left| {{{{\bf{\tilde c}}}_i}} \right| {\log _d}({2^k} - 1)\nonumber \\
 \le& \alpha   \left| {{{{\bf{\tilde c}}}_i}} \right|  {s_i}\nonumber \\
 = &\alpha   \left| {{{{\bf{\tilde c}}}_i}} \right|  {l_d}({b_i})
 \enspace.
\end{align}
\end{proof}

Theorem \ref{theorem1} and \ref{theorem2} give the criteria of selecting the length of the plaintext data element. Next, we relate the security requirement to the amount of encoded stored data.

\begin{theorem}   \label{theorem3}
The system satisfies the security requirement ${P_u}$ if
\begin{eqnarray}
\sum\limits_{j = 1}^{\left| {{{{\bf{\tilde c}}}_i}} \right|} {{l_d}({{\tilde c}_i}(j)) \le \sum\limits_{t = 1}^n {{l_d}({c_t})}  + {{\log }_d}{P_u} - } {\log _d}{\rm{(1 - }}{P_{{e_i}}}) \enspace, \nonumber
\end{eqnarray}
for  $1 \le i \le p$.
\end{theorem}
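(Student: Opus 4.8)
The plan is to write the eavesdropper's success probability $P_g$ explicitly in terms of the number of base-$d$ digits exposed in the attacked database, and then turn the requirement $P_g\le P_u$ into the stated inequality by taking a base-$d$ logarithm. First I would fix the index $i$ of the database that the eavesdropper targets. By the security model the eavesdropper penetrates this database with probability $1-P_{e_i}$; conditioned on a successful penetration it observes exactly the symbols collected in ${\bf{\tilde c}}_i$, which together occupy $\sum_{j=1}^{|{\bf{\tilde c}}_i|} l_d(\tilde{c}_i(j))$ base-$d$ digits, whereas the complete codeword ${\bf{c}}={\bf{A}}{\bf{b}}$ occupies $\sum_{t=1}^{n} l_d(c_t)$ digits in total. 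Conditioned on a failed penetration, the modelling convention is that the attack yields no usable information and hence does not contribute to $P_g$.

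The central step is to bound the conditional probability that, having penetrated database $i$, the eavesdropper recovers ${\bf{b}}$. Since the digits of ${\bf{b}}$ are independent and uniform over $\{0,\dots,d-1\}$ and ${\bf{A}}$ is an invertible matrix over $F_q$, the codeword ${\bf{c}}$ is uniform over $F_q^n$; in particular its $n$ coordinates are mutually independent and uniform, and with the digit length $s=k/\log_2 d$ of Theorem~\ref{theorem1} each coordinate splits into $s$ independent uniform base-$d$ digits. Consequently, conditioned on the digits disclosed inside ${\bf{\tilde c}}_i$, the remaining $\sum_{t=1}^{n} l_d(c_t)-\sum_{j=1}^{|{\bf{\tilde c}}_i|} l_d(\tilde{c}_i(j))$ digits are still uniform and independent of the eavesdropper's view, so any guessing strategy pins down ${\bf{b}}$ with probability at most $d^{-\left(\sum_{t=1}^{n} l_d(c_t)-\sum_{j=1}^{|{\bf{\tilde c}}_i|} l_d(\tilde{c}_i(j))\right)}$. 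Multiplying by the penetration probability, an eavesdropper targeting database $i$ guesses the original data with probability $P_g^{(i)}=(1-P_{e_i})\,d^{\,\sum_{j} l_d(\tilde{c}_i(j))-\sum_{t} l_d(c_t)}$, and the overall probability is $P_g=\max_{1\le i\le p} P_g^{(i)}$.

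It then remains to enforce $P_g\le P_u$. Imposing $P_g^{(i)}\le P_u$ for every $i$, that is $(1-P_{e_i})\,d^{\,\sum_{j} l_d(\tilde{c}_i(j))-\sum_{t} l_d(c_t)}\le P_u$, dividing by $1-P_{e_i}$, and applying $\log_d(\cdot)$ — monotone increasing since $d>1$ — gives $\sum_{j} l_d(\tilde{c}_i(j))-\sum_{t} l_d(c_t)\le \log_d P_u-\log_d(1-P_{e_i})$, which rearranges to exactly the claimed condition. The step I expect to be the main obstacle, and the one I would argue most carefully, is the uniformity-and-independence claim for the undisclosed digits: it relies on the uniform-plaintext assumption together with the invertibility of the Vandermonde encoder, and also on the convention that a thwarted eavesdropper contributes nothing to $P_g$ — otherwise an extra term $P_{e_i}\,d^{-\sum_t l_d(c_t)}$ would have to be carried and the statement would hold only approximately. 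A secondary point worth checking is that storing a coordinate of ${\bf{c}}$ exposes all $s$ of its digits and that ${\bf{\tilde c}}_i$ never already contains enough coordinates to invert ${\bf{A}}$, so that $\sum_{j} l_d(\tilde{c}_i(j))$ is indeed the correct count of leaked digits.
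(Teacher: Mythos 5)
Your proposal is correct and follows essentially the same route as the paper: the paper likewise writes $P_g$ as the product of the invasion probability $(1-P_{e_i})$ and $d^{-\left(\sum_t l_d(c_t)-\sum_j l_d(\tilde c_i(j))\right)}$ for guessing the undisclosed digits, then substitutes the hypothesized bound and simplifies to $P_u$. The only difference is that you supply the uniformity/independence justification for the guessing probability, which the paper simply asserts.
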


\begin{proof}
Recall that an eavesdropper can access only one cloud database. Hence, the probability ${P_g}$ that an eavesdropper can guess the original data is the product of the invasion probability of the cloud database and the probability of
guessing the remaining encoded digits. It follows that
\begin{align}
{P_g} &{\rm{ =  (1 - }}{P_{{e_i}}})  {d^{ - \left( {\left. {\sum\limits_{t = 1}^n {{l_d}({c_t})}  - \sum\limits_{j = 1}^{\left| {{{{\bf{\tilde c}}}_i}} \right|} {{l_d}({{\tilde c}_i}(j))} } \right)} \right.}}\nonumber \\
&\le {\rm{(1 - }}{P_{{e_i}}})  {d^{{{\log }_d}{P_u} - {{\log }_d}{\rm{(1 - }}{P_{{e_i}}})}}\nonumber \\
&= {P_u} \enspace. \label{theorem3_proof}
\end{align}
\end{proof}

\subsection{System Design}

\begin{figure}[t]
\centering
\includegraphics[width=0.8\textwidth]{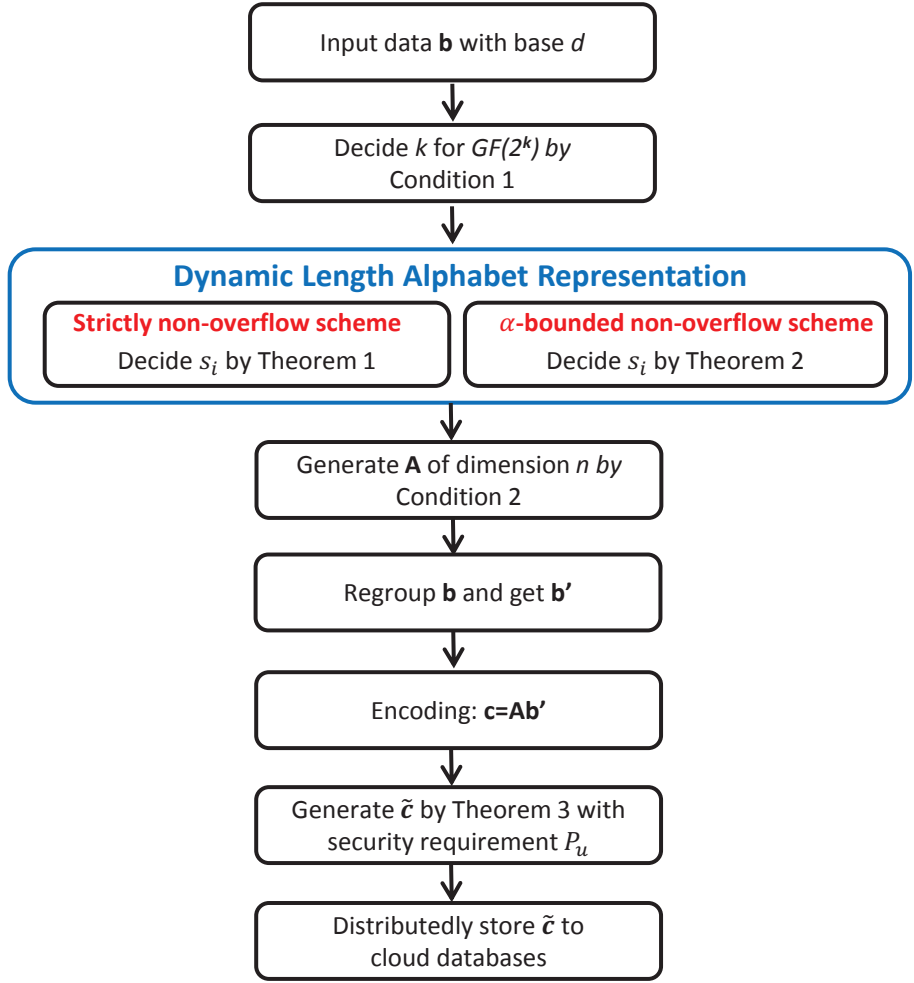}\\
\caption{ System flow of NCSS scheme.} \label{fig3}
\end{figure}

The proposed NCSS scheme can be divided into three steps.
First, a dynamic-length alphabet representation of network coded data is adopted based on Theorem \ref{theorem1} and Theorem \ref{theorem2}.
Second, the original data are  preprocessed and regrouped before the encoding process.
Third, the regrouped data are encoded and distributed to the corresponding cloud databases.

Figure \ref{fig3} shows the system flow of the proposed NCSS scheme. Assume that a cloud user wants to store a single-digit data array ${\bf{b}} = {({b_1}, \ldots ,{b_m})^T}$ with base $d$ to the $p$ cloud databases. We first choose a power $k$ for the field characteristics according to the following condition.
\newtheorem{con}{Condition}
\begin{con}   \label{con1}
${2^k} \ge d$
\end{con}
The field size must be larger than the maximal value of the data array element $d - 1$. Otherwise,  some data elements cannot be represented in the field.
After that, a proper length of data elements ${s_i}$ can be decided according to Theorem \ref{theorem1} and Theorem \ref{theorem2}. This step is called dynamic length alphabet representation. We then regroup ${\bf{b}}$ to ${{\bf{b}}^\prime } = ({b_1}...{b_{{s_1}}},{b_{{s_1} + 1}}...{b_{{s_1} + {s_2}}}, \cdots ,{b_{{{\hat s}_{r - 1}} + 1}}...{b_{{{\hat s}_r}}})$ based on the value of ${s_i}$, where ${{\hat s}_r} \buildrel \Delta \over = \sum\limits_{i = 1}^r {{s_i}}$. Next, we generate an $n \times n$ encoding matrix ${\bf{A}}$ with the following condition.
\begin{con}   \label{con2}
$n < {2^k}$ and $n \le r$
\end{con}
Since matrix ${\bf{A}}$ is constructed from $n$ distinct elements over the Galois Field, we have $n < {2^k}$. In addition, the matrix multiplication cannot be operated if the size of encoding matrix is larger than the size of regrouped data array. We then encode ${\bf{b}}'$ with ${\bf{A}}$ and obtain the encoded data array ${\bf{c}} = {({c_1}, \ldots ,{c_n})^T}$. Finally, ${\bf{c}}$ can be regrouped to ${\bf{\tilde c}}$ by Theorem \ref{theorem3}, which specifies the maximum amount of encoded data that can be stored in a cloud database according to user's security requirement. Finally, the elements of ${\bf{\tilde c}}$ are distributed to the corresponding $p$ cloud databases.

Table \ref{example2} shows an example of the proposed NCSS scheme in the strictly non-overflow case.
We assume that the original data is ${\bf{b}}= \left( {0,0,{\rm{1}},0,{\rm{1}},{\rm{1}},{\rm{1}},0,{\rm{1}}} \right)$ and the encoded data are stored to two cloud databases with ${P_{{e_1}}} = 0.5$, ${P_{{e_2}}} = 0.25$, and ${P_u} = \frac{1}{{64}}$. From Theorem \ref{theorem3},  the maximal numbers of digits that can be stored in the first and the second cloud database are $4$ and $5$, respectively.

\begin{table*}
\vspace*{0cm} \caption{\bfseries Example of adopting NCSS scheme in storing encoded data to two cloud databases}
\begin{center}
\vspace*{-0.1cm}\scriptsize
\renewcommand{\arraystretch}{2}
\begin{tabular}{|l|l|l|l|l|l|l|l||l||l||l|} \cline{1-10}

\multicolumn{1}{|c}{${\bf{b}}$}  \label{example2}
&\multicolumn{1}{|c}{$d$}
&\multicolumn{1}{|c}{$k$}
&\multicolumn{1}{|c}{$s$}
&\multicolumn{1}{|c}{${\bf{b}}'$}
&\multicolumn{1}{|c}{$r$}
&\multicolumn{1}{|c}{$n$}
&\multicolumn{1}{|c}{${\bf{A}}$}
&\multicolumn{1}{|c}{${\bf{c}}$}
&\multicolumn{1}{|c|}{${\bf{\tilde c}}$} \\
\hline
%
%
\multicolumn{1}{|c}{$\left( {0,0,{\rm{1}},0,{\rm{1}},{\rm{1}},{\rm{1}},0,{\rm{1}}} \right)$}
&\multicolumn{1}{|c}{$2$}
&\multicolumn{1}{|c}{$3$}
&\multicolumn{1}{|c}{$3$}
&\multicolumn{1}{|c}{$\left( {00{\rm{1}},0{\rm{11}},{\rm{1}}0{\rm{1}}} \right)$}
&\multicolumn{1}{|c}{$3$}
&\multicolumn{1}{|c}{$3$}
&\multicolumn{1}{|c}{$\left[ {\begin{array}{*{20}{c}}
1&1&1\\
5&1&2\\
6&1&4
\end{array}} \right]$}
&\multicolumn{1}{|c}{$\left( {{\rm{010}},1{\rm{00}},{\rm{00}}1} \right)$}
&\multicolumn{1}{|c|}{$\left( {{\rm{010}}1,{\rm{0000}}1} \right)$} \\
\hline

\end{tabular}
\end{center}
\end{table*}

\section{Security Analysis}  \label{analysis}

In this section, we analyze the proposed NCSS scheme in terms of security level and storage cost. First, we discuss the issue of enhancing security level from a system design aspect. 
Then we derive the upper bound of data size that can be stored in the cloud under the constraint of perfect secrecy. 

To begin with,  from (\ref{theorem3_proof})  we know that the lower bound of the security requirement ${P_u}$ is 
\begin{align}
{\rm{(1 - }}{P_{{e_i}}})  {d^{ - \left( {\left. {\sum\limits_{t = 1}^n {{l_d}({c_t})}  - \sum\limits_{j = 1}^{\left| {{{{\bf{\tilde C}}}_i}} \right|} {{l_d}({{\tilde c}_i}(j))} } \right)} \right.}} \le {P_u}
\enspace. \nonumber
\end{align}
Since ${l_d}({c_t})$ is proportional to the size of Galois Field,  it is anticipated that a larger size of encoding matrix $n$ and a large value of power $k$ of the field characteristics  can result in a higher security level for the network coding storage system.
However, increasing these encoding parameters can result extra coding complexity.
Next, we show that the security level can be enhanced to the perfect secrecy by storing a certain amount of encoded data in the local machine.
The notion of perfect secrecy represents that an eavesdropper can get no information of the original message.
\begin{defi} [Perfect Secrecy Criterion \cite{massey1988introduction}] \label{defi_perfect_secrecy}
Let $S$  denote the random variable associated with the secret data fragments and $E$ denote the random variable associated encoded fragments observed by the eavesdropper.
The perfect secrecy requires \begin{align}
H({\rm{S}}|E) = H({\rm{S}})
\enspace, \nonumber
\end{align}
where H({\rm{X}}) represents the entropy of a random variable $X$.
\end{defi}

In the worst case,  an eavesdropper can access the encoded data of all the cloud databases.  The following theorem can be applied to specify the maximal amount of encoded data fragments that can be stored in the cloud, while keeping the rest of data in a local machine to ensure perfect secrecy.
\begin{theorem}   \label{theorem4}
Assume that $w$-digit secret information is encoded with $n-w$-digit data ${\bf{b}}$. For both strictly non-overflow and ${\alpha }$-bounded non-overflow schemes, a cloud user can store at most $\sum\limits_{j = 1}^n {{l_d}} ({c_j}) - w$ digits of encoded data to the cloud under the perfect secrecy criterion.
\end{theorem}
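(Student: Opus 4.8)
The plan is to read the statement as a converse in the sense of information theory: if the portion of the codeword placed in the cloud exceeds $\sum_{j=1}^{n}l_d(c_j)-w$ digits, then the perfect-secrecy criterion of Definition~\ref{defi_perfect_secrecy} must fail. First I would fix the worst-case observation model stated just before the theorem — the eavesdropper sees every digit stored in the cloud — and let $E$ be the random base-$d$ string of those cloud digits and $L$ the complementary string of digits retained locally, so that the full encoded vector $\mathbf{c}=(E,L)$ carries exactly $\sum_{j=1}^{n}l_d(c_j)$ digits. I would take the $w$-digit secret $S$ to be uniformly distributed, so $H(S)=w\log_2 d$; this is the hardest case, and it is the one the bound must survive.

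The heart of the argument is the entropy chain
\[
H(S)=H(S\mid E)=I(S;L\mid E)+H(S\mid E,L)\le H(L\mid E)+H(S\mid \mathbf{c})\le |L|\log_2 d + H(S\mid\mathbf{c}),
\]
in which the first equality is precisely the perfect-secrecy condition $H(S\mid E)=H(S)$. The key observation is that $H(S\mid\mathbf{c})=0$: since $\mathbf{A}$ is a Vandermonde matrix generated by distinct nonzero elements of $F_{2^k}$ it is invertible, so $\mathbf{b}'=\mathbf{A}^{-1}\mathbf{c}$ and the $w$ secret digits are a deterministic function of $\mathbf{c}=(E,L)$. Hence $w\log_2 d\le |L|\log_2 d$, i.e. $|L|\ge w$, which is exactly $\big(\sum_{j=1}^{n}l_d(c_j)\big)-|L|\le \sum_{j=1}^{n}l_d(c_j)-w$ for the number of cloud-stored digits. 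Because this uses only the invertibility of $\mathbf{A}$ and the total digit count $\sum_{j=1}^{n}l_d(c_j)$ of $\mathbf{c}$, it applies verbatim whether every $l_d(c_j)=s$ (strictly non-overflow) or $l_d(c_j)\le\alpha s$ ($\alpha$-bounded non-overflow); that is why one inequality serves both schemes.

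For the matching achievability — that $w$ suitably chosen local digits actually suffice to keep $\sum_{j=1}^{n}l_d(c_j)-w$ digits in the cloud while preserving perfect secrecy — I would appeal to the MDS property of the Vandermonde encoder, with the remaining $n-w$ input digits playing the role of a uniform one-time key, so that the cloud-stored digits are independent of $S$ exactly as the data-distribution step of the NCSS scheme in Section~\ref{scheme} is constructed to ensure. I expect this direction, rather than the converse, to be the main obstacle: at digit granularity one must verify that the particular set of digits moved off-site still leaves a full-rank "key subspace," so some care is needed in specifying which $w$ digits are retained locally; the converse above is essentially mechanical once $H(S\mid\mathbf{c})=0$ has been noted.
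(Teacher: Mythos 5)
Your converse is correct and rests on the same two pillars as the paper's own proof --- the perfect-secrecy identity $H(S\mid E)=H(S)$ and the decodability fact $H(S\mid\mathbf{c})=0$ coming from the invertibility of the Vandermonde encoder --- but you run the entropy accounting at digit granularity, whereas the paper works component-wise. The paper's chain $H(\mathbf{b}^{(w)})=I(\mathbf{b}^{(w)};\mathbf{c})-I(\mathbf{b}^{(w)};\mathbf{c}_{p+1:p+t})\le H(\mathbf{c})-H(\mathbf{c}_{p+1:p+t})$ forces it to evaluate $H(\mathbf{c}_{p+1:p+t})=tH(b)$ via a Gaussian-elimination (``Eavesdropper Reduced Matrix'') argument, concludes only that $t\le n-w$ \emph{components} may be off-loaded, and then needs a separate, rather informal passage to convert components into digits for the $\alpha$-bounded scheme. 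Your inequality $H(L\mid E)\le|L|\log_2 d$ sidesteps all of that: it never requires the entropy of the cloud-stored portion, applies to an arbitrary partition of the digits rather than to contiguous whole components, and hence treats the strictly non-overflow and $\alpha$-bounded cases uniformly in one line --- a genuine simplification and a mild generalization. The one caveat is that both your argument and the paper's establish only the converse rigorously; the achievability half (that retaining some specific $w$ digits really preserves perfect secrecy) is sketched in both --- the paper via the unproved claim $I(\mathbf{c}^{(\tilde w)};\mathbf{b})=0$, you via the MDS/one-time-key heuristic --- and you are right to flag it as the step needing the most care, since at digit granularity one must verify that the cloud-visible digits are genuinely independent of $S$. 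Within the scope of what the paper itself proves, your version is a clean substitute.
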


\begin{proof}
Let $\mathbf{e}^{(h)}$ represent  a subset containing any $h$ components of vector $\mathbf{e}$. We use $\mathbf{e}_{i:j}$ to denote the subvector formed from the $i$-th to the $j$-th position of vector $\mathbf{e}$. The set of rows from the $i$-th to the $j$-th position of matrix $\mathbf{D}$ is represented as $\mathbf{D}_{i:j}$. In addition, $b_{i}$ are independent random variables uniformly distributed over $\mathbf{F_{\emph{q}}}$ with entropy $H(b_{i}) = H(b)$.

For simplicity, without loss of generality, assume that  $t$ contiguous components of the encoded data $\mathbf{c}_{p+1:p+t}$ are stored to the clouds.
Then we can obtain
\begin{eqnarray}
H({{\bf{b}}^{(w)}}) &=& H({{\bf{b}}^{(w)}}|{{\bf{c}}_{p + 1:p + t}}) - H({{\bf{b}}^{(w)}}|{\bf{c}}) \label{new entropy_1} \\
&=&I({{\bf{b}}^{(w)}};{\bf{c}}) - I({{\bf{b}}^{(w)}};{{\bf{c}}_{p + 1:p + t}}) \label{new entropy_2} \\
&=&H({\bf{c}}) - H({{\bf{c}}_{p + 1:p + t}}) - H({\bf{c}}|{{\bf{b}}^{(w)}}) + H({{\bf{c}}_{p + 1:p + t}}|{{\bf{b}}^{(w)}}) \label{new entropy_3} \\
&\le& H({\bf{c}}) - H({{\bf{c}}_{p + 1:p + t}})
\enspace. \label{new entropy_final}
\end{eqnarray}
In the above equations, (\ref{new entropy_1}) holds because of the perfect secrecy criterion and due to  the fact that the secret information can be reconstructed  if the entire codewords are given.
In (\ref{new entropy_final}), we have $H({{\bf{c}}_{p + 1:p + t}}|{{\bf{b}}^{(w)}}) - H({\bf{c}}|{{\bf{b}}^{(w)}}) \le 0$ since
\begin{eqnarray}
H({\bf{c}}|{{\bf{b}}^{(w)}})- H({{\bf{c}}_{p + 1:p + t}}|{{\bf{b}}^{(w)}}) =H({{\bf{c}}_{p + t + 1:n}}|{{\bf{b}}^{(w)}},{{\bf{c}}_{p + 1:p + t}})\enspace.  \nonumber 
\end{eqnarray}
Since $b_{i}$ are i.i.d random variables, it follows that
\begin{eqnarray}
H({{\mathbf{b}^{\left( w \right)}}})&=& H\left( {{b_{seq(1)}},{b_{seq(2)}}, \ldots ,{b_{seq(w)}}} \right) \nonumber \\
&=& wH(b)  \enspace, \label{w_entropy}
\end{eqnarray}
where $seq(j)$ is the $j$-th element of a random integer sequence within the range $1$ to $n$.
Because the encoded data vector $\mathbf{c}$ contains the entire information of $\mathbf{b}$ at most, we can obtain
\begin{align}
H({\bf{c}}) \le nH(b)
\enspace. \label{c_entropy}
\end{align}
Moreover, the $n\times n$ Vandermonde matrix $\mathbf{A}$ is nonsingular \cite{Klinger_21}. Thus the eavesdropper can apply the Gaussian elimination to obtain the  reduced row echelon form of the submatrix $\mathbf{S}$, whose elements are $[S_{i,j}] = [A_{i,j}]$ for $p+1 \leq i,j \leq p+t$. The \textit{Eavesdropper Reduced Matrix} $\mathbf{M}$ can be obtained as 
\begin{align}
{{\rm{\mathbf{M}}}_{{\rm{p}} + 1:{\rm{p}} + {\rm{t}}}} = \left[ {\begin{array}{*{20}{c}}
{m_1^p}\\
 \vdots \\
{m_1^{p + t - 1}}
\end{array}\begin{array}{*{20}{c}}
{...}\\
 \vdots \\
{...}
\end{array}\begin{array}{*{20}{c}}
{\rm{|}}\\
{\rm{|}}\\
{\rm{|}}
\end{array}\begin{array}{*{20}{c}}
{}\\
{{{\bf{ I}}_{t}}}\\
{}
\end{array}\begin{array}{*{20}{c}}
{\rm{|}}\\
{\rm{|}}\\
{\rm{|}}
\end{array}\begin{array}{*{20}{c}}
{...}\\
 \vdots \\
{...}
\end{array}\begin{array}{*{20}{c}}
{m_n^p}\\
 \vdots \\
{m_n^{p + t - 1}}
\end{array}} \right] \enspace,
\end{align}
where the other element of $\mathbf{M}$ are the same as $\mathbf{A}$.
Hence, the eavesdropper have $t$ equations to solve $n$ unknown elements.
It implies that
\begin{align}
H({{\bf{c}}_{p + 1:p + t}}) = tH(b)
\enspace. \label{t_entropy}
\end{align}
Substituting (\ref{w_entropy}), (\ref{t_entropy}) and (\ref{c_entropy}) into (\ref{new entropy_final}), we obtain
\begin{align}
tH(b) \le nH(b) - {\rm{wH(b)}}
\enspace. \label{result}
\end{align}

The above equation shows that we can store at most the $n-w$ components of encoded data to the clouds under perfect secrecy criterion. For the strictly non-overflow scheme, we have only one digit in each component of encoded data. Thus, we can store at most $\sum\limits_{j = 1}^n {{l_d}} ({c_j})-w$ digits of encoded data to the clouds, while keeping the remaining $w$ digits in the local machines.
However, we may have multiple digits in each component of encoded data for ${\alpha}$-bounded non-overflow scheme. Let ${\mathbf{e}^{(\tilde h)}}$ represent a subset containing any $w$ fragmentary components of vector $\mathbf{e}$. With at least $n$ unknown digits,   knowing ${\mathbf{c}^{(\tilde w)}}$  cannot help solve $\mathbf{b}$. As a result, it follows that
\begin{eqnarray}
I\left( {{\mathbf{c}^{(\tilde w)}};\mathbf{b}} \right) = 0
\enspace.
\end{eqnarray}
Not that we still have $t$ equations to solve $n$ unknown elements. That is, 
\begin{eqnarray}
H(\mathbf{b}^{(w)}|\mathbf{c}_{p+1:p+t},{\mathbf{c}^{(\tilde w)}}) =H(\mathbf{b}^{(w)}|\mathbf{c}_{p+1:p+t})
\enspace.
\end{eqnarray}
Finally, we obtain
\begin{eqnarray}
I\left( {{\mathbf{c}_{p + 1:p + t}},{\mathbf{c}^{(\tilde w)}};{\mathbf{b}^{\left( w \right)}}} \right) =I\left( {{\mathbf{c}_{p + 1:p + t}};{\mathbf{b}^{\left( w \right)}}} \right)
\enspace.
\end{eqnarray}
Consequently, we can select $w$ digits of encoded data from different $w$ components, i.e., select one digit for each component.  These $w$-digit encoded data can be stored in the local machines, while  the remaining $\sum\limits_{j = 1}^n {{l_d}} ({c_j}) - w$ digits are stored to the clouds.
\end{proof}

\section{Storage Minimization} 
We are motivated to analyze the amount of stored encrypted data with the security requirement in terms of the probability that an eavesdropper can correctly guess the original data. This is 
because  only a certain amount of encoded data fragments can be stored in the local machines to enhance the security level, as shown in the previous section.
As the required security level increases, the amount of encoded data stored at the local site should increase.

\subsection{Solving Storage Minimization Problem}
Consider a cloud user keeps encoded data with length $l$ in each encoding operation and stores the remaining encoded data to $p$ cloud databases as shown in Fig.~\ref{costModel}. 
We assume all the cloud databases have the same capability of preventing attacks (i.e., ${P_{{e_i}}}$=${P_{{e}}}$) and the security requirement is ${P_{{u}}}$, which specifies the maximum probability that an eavesdropper can guess the original message.
In addition to the encoded data, the encoding matrix is stored at the local site.

The storage cost at the local site is the function of  encoding matrix size $n$ and the amount of encoded data stored at  a local machine for every encoding operation, denoted by $l$.
Let ${m}$ denote the length of the original message and $\alpha $ represent the number of encoding operations.
Subject to a given security requirement ${P_{{u}}}$, the storage cost minimization problem is expressed as
\begin{eqnarray}
  {\text{minimize    }}f(n,l) = {n^2}s + \alpha l \hfill \nonumber \\
  {\text{subject to  }}
  {(1 - {P_e})^p}  {d^{ - \alpha l}} \leqslant {P_u} \hfill \nonumber \\
  2 \leqslant n \leqslant {2^k} \hfill \nonumber \\
  l \leqslant n \hfill \nonumber \\
  \alpha ns = m \hfill \nonumber \\
  n,l \in {\mathbb{Z}^ + } \enspace, \label{original opt}
\end{eqnarray}
where $s$ is defined in Theorem 1. Note that an eavesdropper can guess the original message only if he/she can invade all the cloud databases and guess the encoded data in the local machine. It is observed that the optimization problem is nonconvex even if we relax the noncovex constraints $n,l \in {\mathbb{Z}^ + }$. The complete algorithm for solving this optimization problem is given in the Appendix.


\subsection{Discussions}
Figure \ref{fign_m} shows the optimal parameter setting for encoding matrix size ${n}$ versus the original message length ${m}$ for $d = 2$, ${P_e} = 0.5$, $p = 3$, and ${P_u} = {10^{ - 6}}$. As the message length increases, the size of the encoding matrix increases. A smaller encoding matrix size is preferred if  Galois field size is large.
 Due to the integer constraints in the optimization problem, the encoding matrix size increases in a step-like function.

Figure \ref{figf_m} shows  the storage cost $f(n,l)$ versus message length ${m}$ for $d = 2$, ${P_e} = 0.5$, and $p = 3$. Intuitively,  we need more storage space for lower ${P_u}$. 
However,  the storage cost with various ${P_u}$ are the same when ${m}$ exceeds certain threshold.  This is because the considered system is in the case of lower bound cost (i.e., ${l}=1$). Noteworthily, a larger ${k}$ can yield a smaller lower bound cost when ${m > 1000}$. In a general setting $k \in [8,16]$ \cite{angelopoulos2011energy}.   For ${m < 1000}$ it is suggested that the value of $k$ is set to  ${k=8}$; otherwise,  ${k=16}$.

\begin{figure}
\centering
\includegraphics[width=0.9\textwidth]{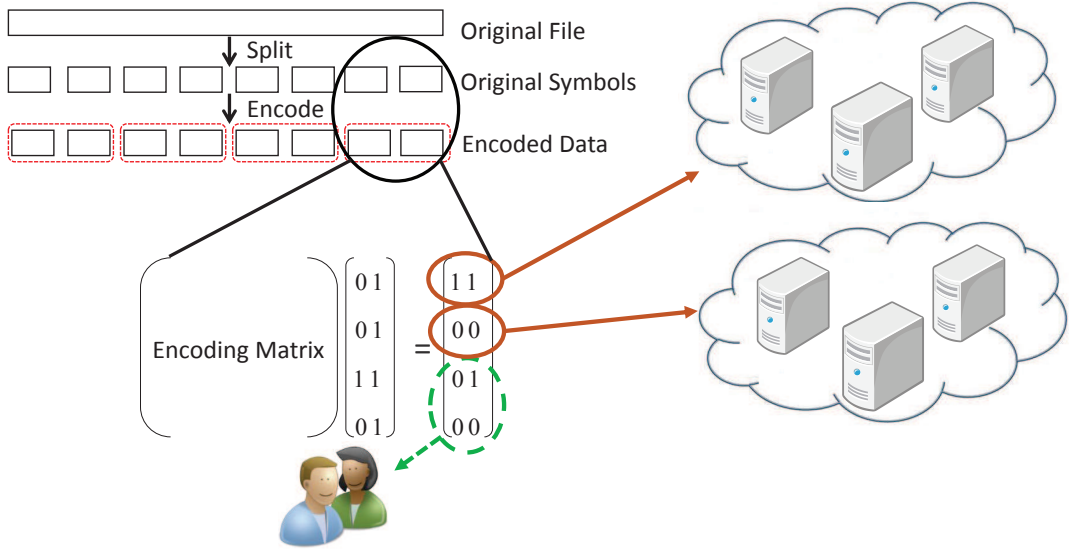}\\
\caption{Illustration of a user keeping a certain amount of encoded data at the local site to enhance security protection.} \label{costModel}
\end{figure}

\begin{figure}
\centering
\includegraphics[width=0.9\textwidth]{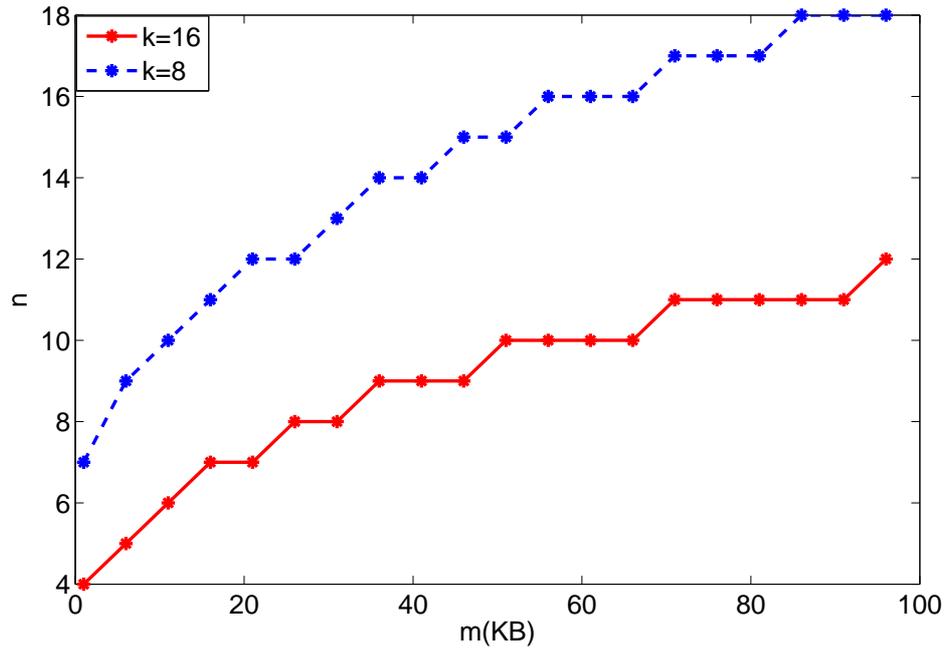}\\
\caption{Optimal parameter setting for encoding matrix size versus message length under different Galois Field sizes ${2^k}$.} \label{fign_m}
\end{figure}

\begin{figure}
\centering
\includegraphics[width=0.9\textwidth]{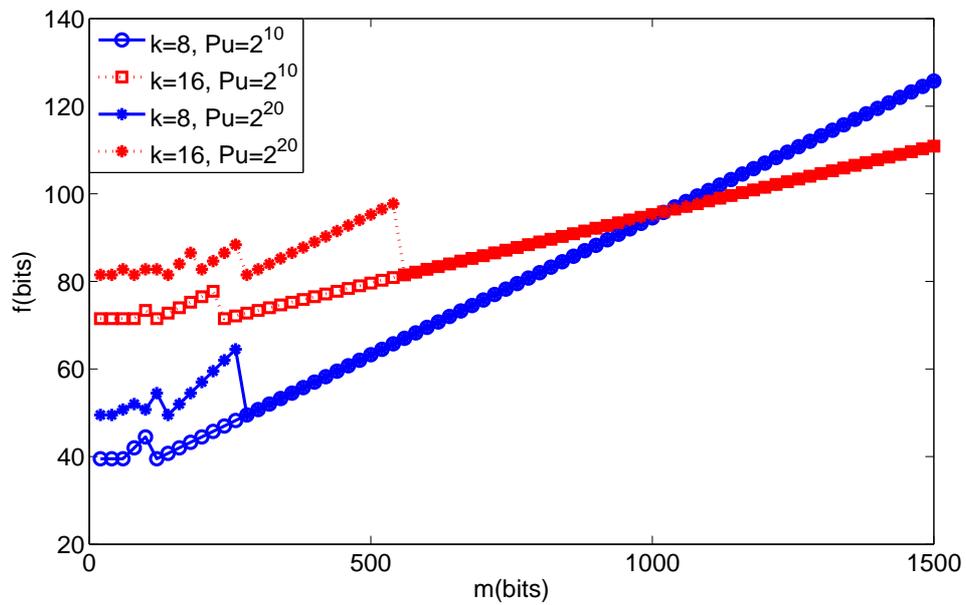}\\
\caption{Storage cost versus message length for different Galois Field sizes ${2^k}$ and security requirement ${P_u}$.} \label{figf_m}
\end{figure}

\section{Experimental Results}  \label{experiment}

Since the encoding process is performed on local machines, processing delay may be  performance bottlenecks.  Thus, it is of importance to investigate the impact of the system design parameters  on the delay performance when considering  a secure network coding 
storage scheme. We performed experiments on a commodity computer with an Intel Core i5 processor running at 2.4 GHz, 8 GB of RAM, and a 5400 RPM Hitachi 500 GB Serial ATA drive with an 8 MB buffer.

Figure \ref{fig4} shows the multiplication processing time of the
network coding storage system with  different sizes of  Galois Field. 
Although the complexity for the network coding is $O({n^2})$ modular multiplication, 
our result shows that the field size has only slight impact on the processing time, which supports our design methodology of selecting $k$.  Specifically, it indicates the possibility that   the security level can be enhanced significantly by selecting an appropriate design of $k$  but only pay a very small computational cost.

Figure \ref{fig5} shows the processing time between the strictly non-overflow and the ${\alpha }$-bounded non-overflow schemes for 2MB file with $p = 2$, where $\alpha  = 5$.
The processing time is longer for a smaller $n$ or $k$ since the numbers of encoding times  increase.
As a result, the system spends more time  in I/O operations and fetching data between the kernel and user \cite{Fragouli2006}.
Compared to the strictly non-overflow scheme, the ${\alpha }$-bounded non-overflow scheme requires more computation cost.
 The ${\alpha }$-bounded non-overflow scheme costs more than 11 times and 22 times of the processing time than that of the strictly non-overflow scheme when $k = 16$ and 8, respectively.
Finally,  the best performance is achieved when $n > 100$ for both non-overflow schemes. Because increasing $n$ results in a larger cost than  increasing $k$, we suggest to  fix $n = 100$ and adjust $k$ to meet the security requirements.

Figure \ref{fig6}  compares the processing time of  the strictly non-overflow and the ${\alpha}$-bounded non-overflow schemes versus the power of Galois Field characteristic $k$.
As shown in the figure,   the strictly non-overflow scheme is preferable to the ${\alpha}$-bounded  non-overflow scheme.
It is noteworthy that $k$ has negligible effect on the processing time of the strictly non-overflow scheme while it has a great impact on that of the ${\alpha }$-bounded non-overflow scheme.

\begin{figure}
\centering
\includegraphics[width=0.9\textwidth]{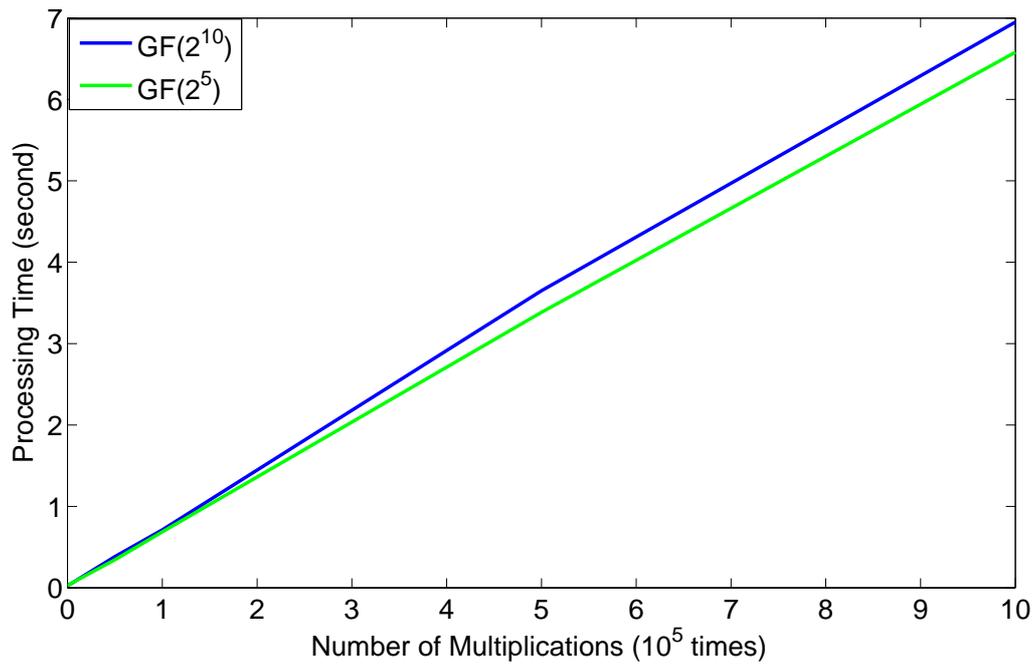}\\
\caption{Processing time versus the multiplication times for different Galois Fields ${2^k}$.} \label{fig4}
\end{figure}

\begin{figure}
\centering
\includegraphics[width=0.9\textwidth]{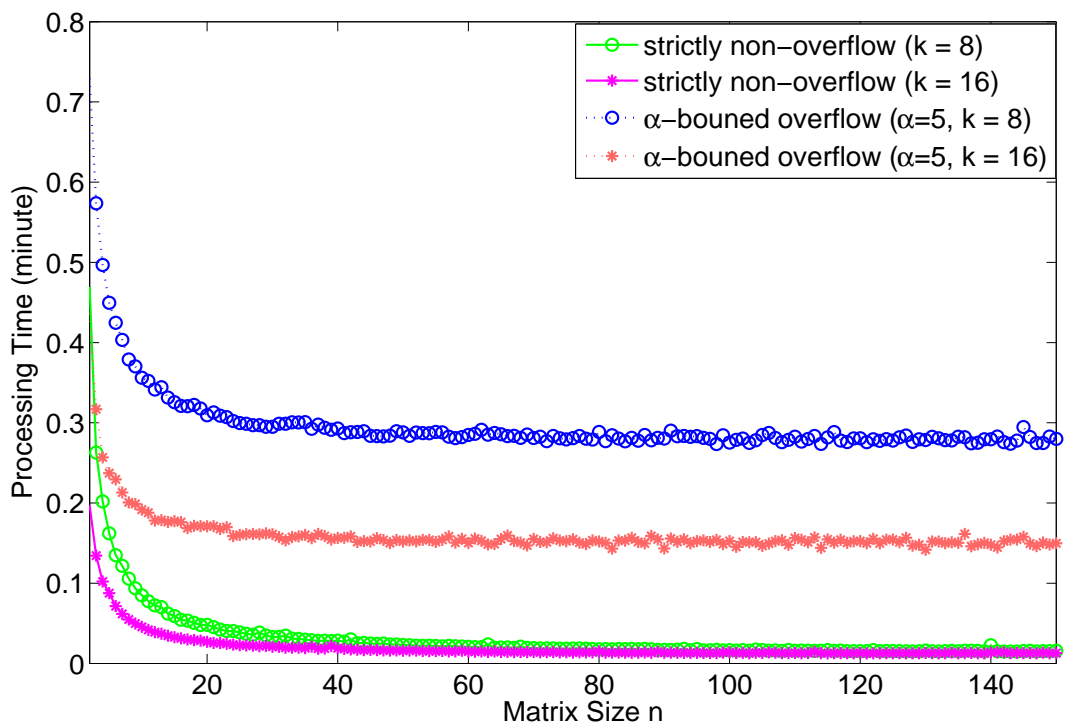}\\
\caption{Comparison of processing time between the strictly non-overflow and the ${\alpha }$-bounded non-overflow schemes versus matrix size $n$ with $p = 2$.} \label{fig5}
\end{figure}

\begin{figure}
\centering
\includegraphics[width=0.9\textwidth]{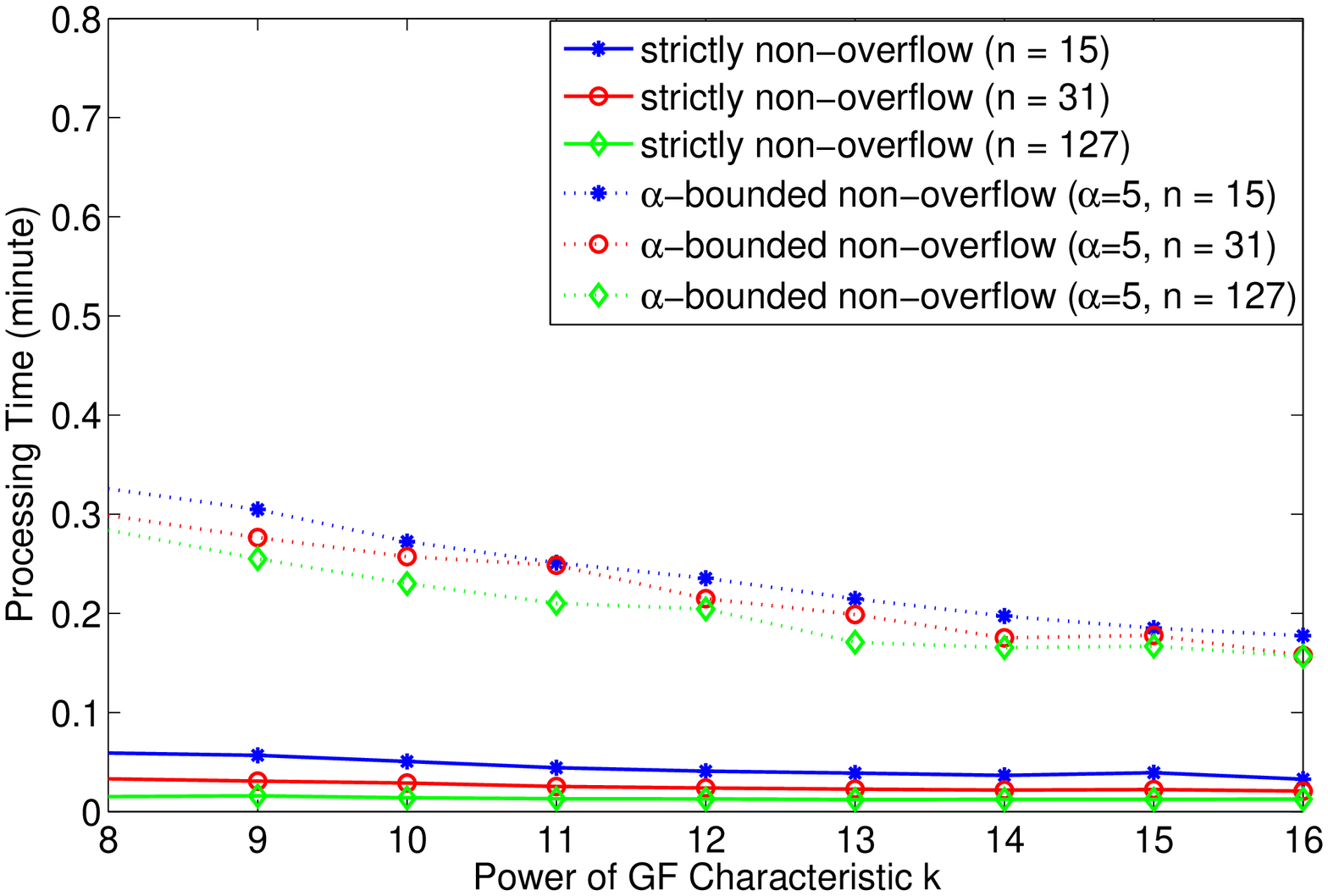}\\
\caption{Comparison of processing time between the strictly non-overflow and the ${\alpha }$-bounded non-overflow schemes versus power of Galois Field characteristic $k$ with $p = 2$.} \label{fig6}
\end{figure}

\section{Conclusions} \label{conclusion}
In this paper, we investigated the overflow problem in a network coding cloud storage system. When the overflow problem occurs, it does not only require more storage spaces but increases the processing time in encoding. We developed the network coding based secure storage (NCSS) scheme.  A systematic approach for the optimal encoding and storage parameters was provided to solve the overflow problem and minimize the storage cost. 
We also derived an analytical upper bound for the maximal allowable stored data in the cloud nodes under perfect secrecy criterion.
Our experimental results demonstrated  that encoding efficiency in terms of processing time
can be improved by jointly design of the encoding and the storage system parameters.
More importantly, we suggested the key design guidelines for secure network coding storage systems to optimize the performance tradeoff among security requirement, storage cost per node, and encoding processing time.
In the future research, we will further incorporate the factors of user budgets and file
recovery into the secure network coding distributed storage system.


\bibliography{reference}
\bibliographystyle{ieeetran}
\bibliographystyle{unsrt}

\section*{Appendix}

Here we first show that the original storage cost minimization (\ref{original opt}) is not convex even when the integer constraint is relaxed. Then we give the algorithm for solving the optimization problem by minimizing over separated variables.

\begin{theorem}  \label{theorem_nonconvex}
The objective function of the original storage cost minimization (\ref{original opt}) is not convex.
\end{theorem}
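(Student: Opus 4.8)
The plan is to show that the Hessian of the objective fails to be positive semidefinite at every point of the relaxed feasible region, which already rules out convexity. First I would use the equality constraint $\alpha n s = m$ to eliminate the auxiliary quantity $\alpha = m/(ns)$, so that the objective becomes a function of the two genuinely free variables $n$ and $l$:
\[
g(n,l) \;=\; s\,n^2 + \alpha l \;=\; s\,n^2 + \frac{m}{s}\cdot\frac{l}{n},
\]
on the relaxed domain $2 \le n \le 2^k$, $0 \le l \le n$, with $s>0$ and $m>0$ fixed. This is the natural reduction because the problem in (\ref{original opt}) optimizes over $n$ and $l$, with $\alpha$ determined by $n$.

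Next I would compute the $2\times 2$ Hessian of $g$ directly. Since $\partial^2 g/\partial l^2 = 0$, $\partial^2 g/\partial n^2 = 2s + 2(m/s)\,l/n^3 \ge 0$, and $\partial^2 g/\partial n\,\partial l = -(m/s)/n^2 \neq 0$, one obtains
\[
\det \nabla^2 g(n,l) \;=\; -\Big(\frac{m}{s\,n^2}\Big)^{2} \;<\; 0
\]
for every feasible $(n,l)$. Hence the Hessian is indefinite on the whole relaxed domain, so $g$ — and therefore the original objective — is not convex; evaluating at an interior point such as $n=2$, $l=1$ makes the necessary condition for convexity fully rigorous. (Equivalently, without eliminating $\alpha$, the bilinear term $\alpha l$ already carries the indefinite Hessian $\bigl[\begin{smallmatrix}0&1\\1&0\end{smallmatrix}\bigr]$, so the same conclusion follows even more quickly.)

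Since the argument reduces to a one-line Hessian computation, there is no serious obstacle here. The only point that deserves care is the bookkeeping about what is actually being relaxed: $\alpha$ is not an independent variable but is pinned to $n$ through $\alpha n s = m$, so nonconvexity must be exhibited in the $(n,l)$-plane (or, if one keeps $\alpha$, one must remember it enters only via the bilinear product $\alpha l$). It is also worth stating explicitly that indefiniteness of the Hessian at interior points of the relaxed feasible set suffices to contradict convexity, so one need not produce an explicit violating pair of feasible points.
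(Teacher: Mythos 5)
Your proposal is correct and follows essentially the same route as the paper: eliminate $\alpha$ via $\alpha n s = m$ to get the two-variable objective $s n^2 + (m/s)\,l/n$, compute its Hessian, and observe that it is not positive semidefinite (the paper exhibits a negative eigenvalue from the characteristic equation, while you note the determinant $-\bigl(m/(s n^2)\bigr)^2 < 0$ directly --- a trivially equivalent and slightly cleaner check).
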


\begin{proof}
We consider the case of strictly non-overflow scheme. Substituting ${s_i} = s = \frac{k}{{{{\log }_2}d}}$ into (\ref{original opt}), the original storage cost minimization is equivalent to
\begin{eqnarray}
  {\text{minimize    }}\tilde f(n,l) =  \frac{k}{{{{\log }_2}d}}  {n^2} + \frac{{m{{\log }_2}d}}{k}  {n^{ - 1}}l \hfill \nonumber \\
  {\text{subject to  }}   - \frac{k}{{m{{\log }_2}d}}  {\log _d}\frac{{{P_u}}}{{{{(1 - {P_c})}^p}}}  n \leqslant l \leqslant n  \hfill \nonumber \\
   2 \leqslant n \leqslant {2^k}  \hfill \nonumber \\
   n,l \in {\mathbb{Z}^ + } \enspace. \label{reformulate opt}
\end{eqnarray}
Then we prove the theorem by showing that the Hessian matrix of the objective function is not positive semidefinite. The Hessian matrix of $\tilde f(n,l)$ is
\begin{eqnarray}
H(\tilde f) = \left[ {\begin{array}{*{20}{c}}
  {2a + 2bl{n^{ - 3}}}&{ - b{n^{ - 2}}} \\
  { - b{n^{ - 2}}}&0
\end{array}} \right]
\nonumber \enspace,
\end{eqnarray}
where $a = \frac{k}{{{{\log }_2}d}} > 0$ and $b = \frac{{m{{\log }_2}d}}{k} > 0$. Then, we solve the characteristic equation
\begin{eqnarray}
\det (H(\tilde f) - \lambda I) = {\lambda ^2} - (2a + 2bl{n^{ - 3}})\lambda  - {b^2}{n^{ - 4}} = 0
\nonumber \enspace.
\end{eqnarray}
We can obtain
\begin{eqnarray}
\lambda  = \frac{{2a + 2bl{n^{ - 3}} \pm \sqrt {{{\left( {2a + 2bl{n^{ - 3}}} \right)}^2}{\text{ +  }}4{b^2}{n^{ - 4}}} }}{2}
\nonumber \enspace.
\end{eqnarray}
Since the eigenvalues of $H(\tilde f)$ is not all positive, $H(\tilde f)$ is not positive semidefinite. Thus $\tilde f$ is not convex.

\end{proof}

We are now ready for solving the equivalent optimization problem (\ref{reformulate opt}) by minimizing over separated variables. Define ${{\tilde f}^ * }(n,l) \triangleq \mathop {\min }\limits_{n \in \mathbf{B},l \in \mathbf{A}} {\tilde f}$ and ${l^ * } \triangleq  \arg \mathop {\min }\limits_{l \in \mathbf{A}} \tilde f(n,l)$, where $\mathbf{A} = \{ x|x \in {\mathbb{Z}^ + },\frac{{nk}}{{m{{\log }_2}d}}  {\log _d}\frac{{{P_u}}}{{{{(1 - {P_c})}^p}}} \leqslant x < n\} $ and $\mathbf{B} = \{ x|x \in {\mathbb{Z}^ + },2 \leqslant x < {2^k}\}$. We first minimize over $n$
\begin{eqnarray}
{{\tilde f}^ * }(n,l) = \mathop {\min }\limits_{n \in B} \{ x|x = \mathop {\min }\limits_{l \in A} \tilde f(n,l)\}
\nonumber \enspace.
\end{eqnarray}
Since  $\mathop {\min }\limits_{l \in A} \tilde f(n,l)$ is a linear function with one variable in $\mathbb{Z}_{ +  + }^1$ for fixed $n$ and the coefficient is positive, we obtain
\begin{eqnarray}
{l^ * } = \min \{ \mathbf{A}\}  = \left\lceil {\frac{{ - nk}}{{m{{\log }_2}d}}  {{\log }_d}\frac{{{P_u}}}{{{{(1 - {P_c})}^p}}}} \right\rceil
\nonumber \enspace.
\end{eqnarray}
As a result, we can solve the optimization problem iteratively as:

\begin{itemize}

\item Step 0: Initiate $\mathbf{C} = \emptyset$ and $\mathbf{B} = \{ x|x \in {\mathbb{Z}^ + },2 \leqslant x < {2^k}\} $.
\item Step 1: Select  $n \in \mathbf{B} $ and set $l = \left\lceil {\frac{{ - nk}}{{m{{\log }_2}d}}  {{\log }_d}\frac{{{P_u}}}{{{{(1 - {P_c})}^p}}}} \right\rceil $.
\item Step 2: Calculate $c = \tilde f(n,l)$.
\item Step 3: Set $\mathbf{C} = \mathbf{C} \cup \{ c\}$ and $\mathbf{B} = \mathbf{B} - \{ n\}$.
\item Step 4: Iterate 1 to 4 until $ \mathbf{B} = \emptyset$.
\item Step 5: Obtain ${{\tilde f}^ * }(n,l) = \min \{ \mathbf{C}\}$.

\end{itemize}

\newpage
\vspace{-3cm}
\begin{biography}[{\includegraphics[width=1in,height
=1.1in,clip,keepaspectratio]{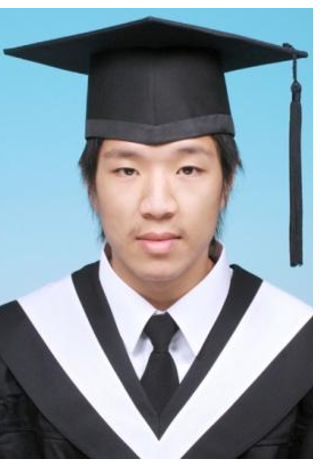}}]{Yu-Jia Chen}
received the B.S. degree and Ph.D. degree in electrical engineering from National Chiao Tung University, Taiwan, in 2010 and 2016, respectively. He is currently a postdoctoral fellow in National Chiao Tung University. His research interests include network coding for secure storage in cloud datacenters, software defined networks (SDN), and sensors-assisted applications for mobile cloud computing. Yu-Jia Chen has published 15 conference papers and 3 journal papers. He is holding two US patent and three ROC patent.
\end{biography}

\vspace{-3cm}
\begin{biography}[{\includegraphics[width=1in,height
=1.1in,clip,keepaspectratio]{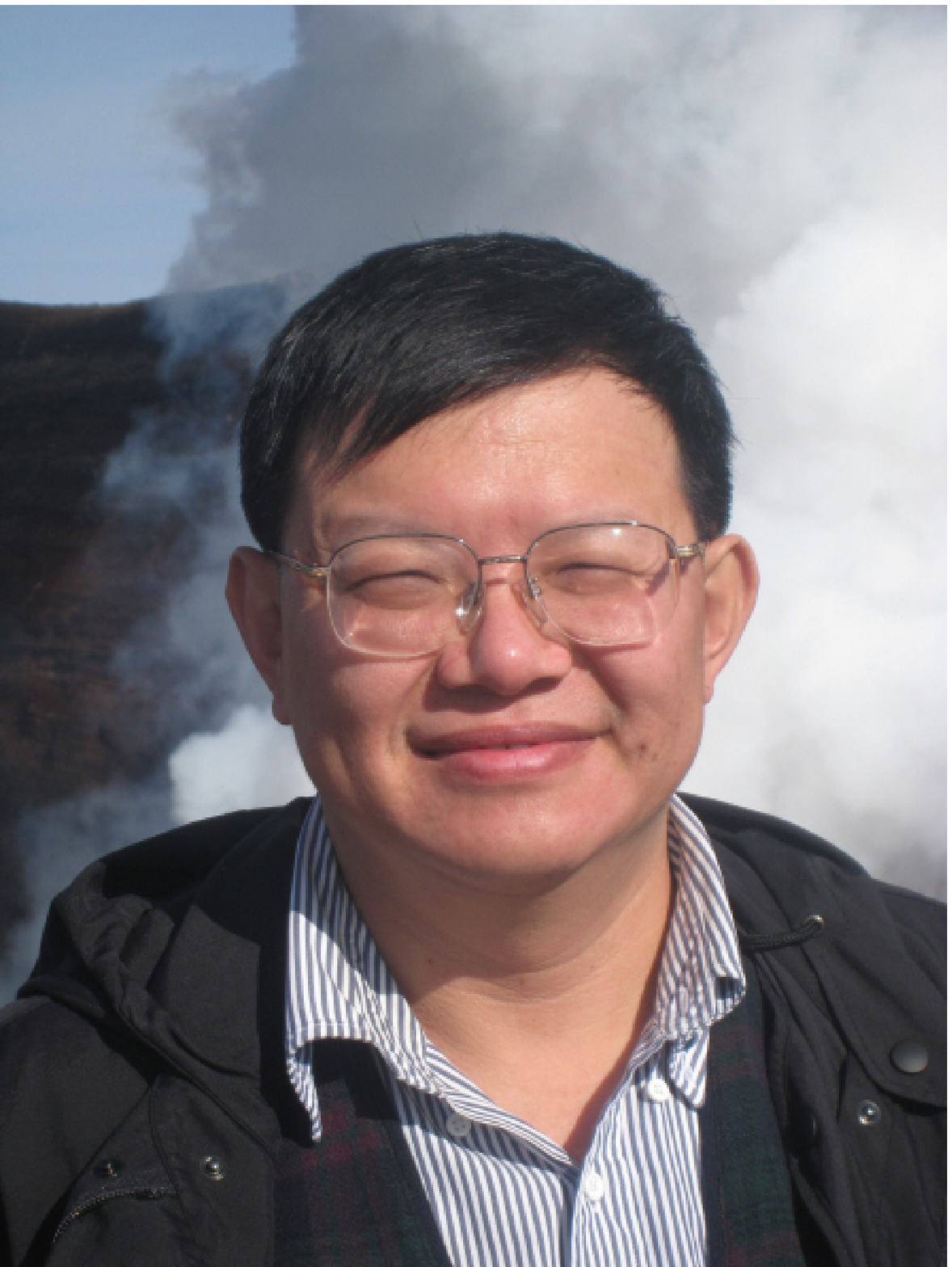}}]{Li-Chun Wang
(M'96 -- SM'06 -- F'11)} received the B.S. degree from National Chiao Tung University, Taiwan, R.O.C. in 1986, the M.S. degree from National Taiwan University in 1988, and the Ms. Sci. and Ph. D. degrees from the Georgia Institute of Technology , Atlanta, in 1995, and 1996, respectively, all in electrical engineering.

From 1990 to 1992, he was with the Telecommunications Laboratories of Chunghwa Telecom Co. In 1995, he was affiliated with Bell Northern Research of Northern Telecom, Inc., Richardson, TX. From 1996 to 2000, he was with AT\&T Laboratories, where he was a Senior Technical Staff Member in the Wireless Communications Research Department.  Since August 2000, he has joined the Department of Electrical and Computer Engineering of National Chiao Tung University in Taiwan and is the current Chairman of the same department. His current research interests are in the areas of radio resource management and cross-layer optimization techniques for wireless systems, heterogeneous wireless network design, and cloud computing for mobile applications.

Dr. Wang won the Distinguished Research Award of National Science Council, Taiwan in 2012, and was elected to the IEEE Fellow grade in 2011 for his contributions to cellular architectures and radio resource management in wireless networks. He was a co-recipient(with Gordon L. Stuber and Chin-Tau Lea) of the 1997 IEEE Jack Neubauer Best Paper Award for his paper ``Architecture Design, Frequency Planning, and Performance Analysis for a Microcell/Macrocell Overlaying System," IEEE Transactions on Vehicular Technology, vol. 46, no. 4, pp. 836-848, 1997. He has published over 200 journal and international conference papers. He served as an Associate Editor for the IEEE Trans. on Wireless Communications from 2001 to 2005, the Guest Editor of Special Issue on "Mobile Computing and Networking" for IEEE Journal on Selected Areas in Communications in 2005, "Radio Resource Management and Protocol Engineering in Future Broadband Networks" for IEEE Wireless Communications Magazine in 2006, and "Networking Challenges in Cloud Computing Systems and Applications," for IEEE Journal on Selected Areas in Communications in 2013, respectively. He is holding 10 US patents.
\end{biography}

\end{document}